\newtheorem{assumption}{Assumption}[section]
\newtheorem{axiom}{Axiom}[section]
\newtheorem{definition}{Definition}[section]
\newtheorem{proposition}{Proposition}[section]
\newtheorem{remark}{Remark}[section]
\newtheorem{theorem}{Theorem}[section]
\newcommand{\beq}{\begin{equation}}
\newcommand{\eeq}{\end{equation}}
\newcommand{\bdm}{\begin{displaymath}}
\newcommand{\edm}{\end{displaymath}}
\newcommand{\bea}{\begin{eqnarray}}
\newcommand{\eea}{\end{eqnarray}}
\numberwithin{equation}{section}
\begin{document}
\title{On the Reality of the Wavefunction}
\author{Omid Charrakh\thanks{Munich Center for Mathematical Philosophy, LMU Munich, Geschwister-Scholl-Platz 1, 80539 Munich (Germany) -- omid.charrakh@campus.lmu.de.}}
\date{}
\maketitle
\begin{abstract}
Within the Ontological Models Framework (OMF), Pusey, Barrett, and Rudolph (PBR) have given an argument by which they claimed that the epistemic view on the wavefunction should be ruled out. This study highlights an incorrect conclusion in PBR's arguments, which was made due to inadequacies in the definitions of OMF. To be precise, OMF models the ontology of the preparation procedure, but it does not model the ontology of the measurement device. Such an asymmetric treatment becomes problematic, in scenarios in which measurement devices have a quantum nature. Modifying the OMF's definition such that the ontology of the measurement device becomes included, we will see how PBR's result disappears. 
\end{abstract}

\tableofcontents

\section{Introduction}

In the foundations of quantum mechanics, the status of the wavefunction is one of the most debated issues. Associating a wavefunction with each system, QM provides very accurate predictions about the possible measurement outcomes. However, the axioms of QM are silent about the nature of the wavefunction, i.e., these axioms are logically consistent with several different ontological statuses of the wavefunction.

The Ontological Models Framework (OMF) introduced by Harrigan and Spekkens \cite{harrigan2010einstein}, is an attempt to provide a general formalism by which one can evaluate the ontology of the quantum theory and specifically the ontology of the wavefunction in realistic interpretations of QM.             

Leifer explains \cite{leifer2011can} that one can choose one of the following three philosophical stands toward the wavefunction's interpretation. 

\begin{enumerate}
\item Antirealist Epistemic: Wavefunctions are epistemic, and there is not any deeper reality behind them. 
\item Realist Epistemic: Wavefunctions are epistemic, but there is an underlying reality.
\item Realist Ontic: There is an underlying reality, and the wavefunctions correspond to the states of reality.
\end{enumerate}

Options one and two are called $\psi$-epistemic views. In these views, one interprets the wavefunction as a mathematical object which merely represents the agent's knowledge of the quantum systems. Option three is referred to as the $\psi$-ontic view, within which the wavefunction is considered to be a \textit{real} object corresponding to an underlying reality of systems; such an object exists independently of the observer \cite{leifer2014quantum}.        

An important question is whether it is possible in principle to construct a realist $\psi$-epistemic interpretation of QM (option two), or whether a realist is compelled to think ontologically about wavefunctions. PBR try to answer this question, and their answer is negative. In their famous article \cite{pusey2012reality}, PBR introduce an operational scenario whose statistics cannot be reproduced by $\psi$-epistemic ontological models. PBR concluded that there is no way of having a realist epistemic interpretation of $\psi$ (option two); thus, the only remaining option for a realist is to choose option three.
 
The logical structure of the PBR's argumentation can be expressed in the following way. There is a contradiction between the statistics of quantum theory ($P_1$) and the statistics of $\psi$-epistemic ontological models ($P_2$) if one accepts a seemingly reasonable assumption about the ontological structure of the ontic states of a composite system, called the ``Preparation Independence Postulate" ($P_3$). In general, these statements are going to be discussed within the ontological models framework ($P_4$).  

\beq
\underbrace{ QSTAT }_{\text{$P_1$}}\hspace{2mm}\land\hspace{2mm}
\underbrace{ \psi-epistemic }_{\text{$P_2$}}\hspace{2mm}\land
\underbrace { PIP }_{\text{$P_3$}}\hspace{2mm}\land\hspace{2mm}
\underbrace{ OMF }_{\text{$P_4$}}\hspace{2mm}
\Rightarrow\hspace{2mm}\bot
\eeq

In confronting the above contradiction, so far, several arguments have been suggested aiming to put doubt on one (or even more than one) of the above propositions. Most of these suggestions can be classified into three categories.  

\begin{enumerate}
\item Rejection of $P_1$. In this category, one can argue that maybe QM's predictions are not always true. The idea of having null-measurements (measurements with no outcome) \cite{schlosshauer2014no}, or detection-inefficiency loophole \cite{detect} can be considered in this class.            
\item Rejection of $P_2$. Another option is to accept the above contradiction, as a clue of a subtle constraint on accepting the $\psi$-epistemic models. PBR themselves claimed that for a realist person, the only remaining option is to be a $\psi$-ontologist \cite{pusey2012reality}.

\item Rejection of $P_3$. Most researchers have criticized the PIP assumption. Arguing that the two subsystems with a common past may share some ontic states in the present \cite{hall2011generalisations} or the idea that there may be some holistic properties of a composite system which cannot be reduced to its sub-systems \cite{emerson2013whole}, and several other arguments which can be found in \cite{leifer2014quantum}, are examples of such criticisms.  

\item Rejection of $P_4$. Except for the arguments from antirealists, who entirely reject the existence of an underlying reality behind the quantum states, there is no serious criticism of the PBR's theorem which can be classified in this category. However, the argument provided in this study falls into this category. 
\end{enumerate}

This research highlights an incorrect conclusion in PBR's arguments, which was made due to inadequacies in the definitions of OMF. To be precise, OMF models the ontology of the preparation procedure, but it does not model the ontology of the measurement device. Such an asymmetric treatment becomes problematic, in scenarios in which measurement devices have a quantum nature. Modifying the OMF definition so that we can insert the ontology of the measurement procedure into our analysis, we can see that PBR's contradiction is not because of the $\psi$-epistemic models. Rather, the problem is in the heart of our framework (OMF). I refer this modified version of the OMF as to ``MOMF" (Modified Ontological Models Framework).  

The order of the paper's sections is as follows. A review of the OMF's definitions and how they relate to the axioms of QM, as well as a precise description of PBR's scenario in OMF, are provided in section two. Pointing out the problem of the $\psi$-epistemic ontological models caused by OMF, the mathematical structure of the MOMF formalism followed by a statement of the PBR's theorem within MOMF are discussed in section three. There, We will see how without using any further assumptions, the problem of the $\psi$-epistemic models can be solved in MOMF. In conclusion, by reviewing all the mentioned ideas, it will be argued that PBR's finding does not provide any implication for the nature of the wavefunction, rather what they have found shows us that we should be more careful about the way we model the ontology of QM. 

\section{OMF and the PBR Theorem}

\subsection{Operational QM}

Consider an operational $PM$ scenario, i.e., a scenario in which a system is prepared via a procedure/device $P$, and then measured via a procedure/device $M$. In this situation, the interaction between the prepared system and the measurement device results in an outcome for which QM provides us the associated probability of finding that outcome. Figure (\ref{PM_EXP}) depicts such a scenario. According to QM, we can express the following axioms for describing such a situation.     

\begin{figure}
\centering
\vspace{-2cm}
\includegraphics[width=0.6\textwidth]{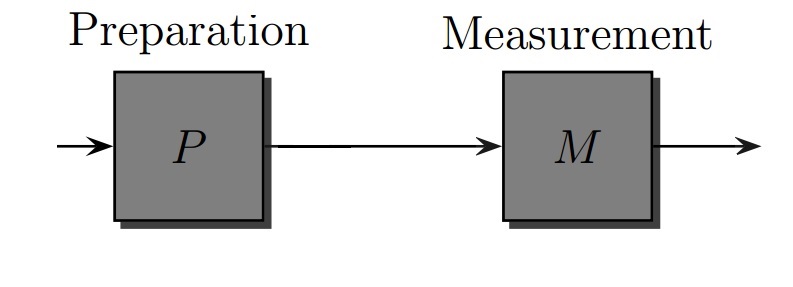}
\vspace{-5mm}
\caption{\label{PM_EXP}The schematic of an operational $PM$ scenario. A preparation device $P$, prepares a state which is supposed to be measured via a measurement device $M$.}
\end{figure}

\begin{definition}\textbf{Operational QM} posits the following two axioms to describe a $PM$ scenario \cite{ranchin2012progress}.   

\begin{axiom}[Preparation] A preparation $P$ is associated to a trace one positive operator $\rho$, known as the density operator, acting on the Hilbert space $\mathcal{H}$.
\end{axiom}

\begin{axiom}[Measurement] A measurement $M$ is associated with a positive operator valued measure (POVM) $\{ M_m \}$ such that $\sum_{m}{M_m}=I$. 

Finally, the probability of a measurement $M$ yielding outcome $m$, given a preparation procedure $P$, is

\beq
pr(m\mid P, M) = Tr(M_m \rho).
\eeq
\end{axiom}
\end{definition}

\subsection{Ontological Models Framework}

To investigate the ontological status of $\psi$, one needs an ontological framework. Such a framework cannot be inferred directly from the axioms of QM. What we can do, is to look for an ontological model whose structure is constrained by the fact that the predictions of the ontological model and the operational ones, should be equal. In 2007, Harrigan and Spekkens proposed a framework called ``Ontological Models Framework (OMF)" \cite{harrigan2010einstein} aiming to provide the most general formalism for realistic interpretations of quantum scenarios. 

The basic idea of OMF is to associate an ontic state space $\Lambda=\{\lambda_i\}$ with each physical system. This space is supposed to be the space of all the possible ontic states $\lambda_i$ of a system in the sense that the ontic states are the \textit{real} states of a system which are independent of our observation, information and what so ever. Following \cite{harrigan2010einstein}, we have the following three definitions.  

\begin{definition} \textbf{OMF} posits an ontic state space $\Lambda$ and prescribes a probability distribution over $\Lambda$ for every preparation procedure $P$, denoted by a density function $pr(\lambda\mid P)$, and a probability distribution over the different outcomes $M_m$ of a measurement $M$ for every ontic state $\lambda\in\Lambda$, denoted by a response function $pr(m\mid\lambda, M)$. Finally, for all $P$ and $M$, it must satisfy the conditions   

\beq \int_{\Lambda} pr(\lambda\mid P) d\lambda = 1 \eeq

\beq\sum_{m}{pr(m\mid\lambda, M)}=1\eeq 

\beq\int_{\Lambda} pr(\lambda\mid P) pr(m\mid\lambda,M) d\lambda = pr(m\mid P,M)=Tr(M_m\rho)\eeq
where $\rho$ is the density operator associated with $P$ and $M_m$ is the POVM element associated with the $m$-th outcome of measurement $M$.
\end{definition}

The idea of OMF is straightforward; As an operational theory, quantum theory associates an operational state $\rho$ with each preparation procedure. But, since we are not sure whether $\rho$ is the system's ontic state ($\lambda$) or not, we should simulate the ontic level of $\rho$ by the expression $pr(\lambda\mid P)$. After this simulation, the probabilities of finding different measurement outcomes will depend on the ontic states of the preparation procedure, expressed by $pr(m\mid\lambda, M)$. This implies that the outcomes' probabilities depend on the measurement procedure $M$ and the ontic state $\lambda$. Finally, these probabilities should be summed over all the relevant ontic states.                      

\begin{definition} An ontological model is \textbf{$\psi$-ontic} if for any pair of preparation procedures, $P_{\psi_1}$ and $P_{\psi_2}$, associated with distinct (and non-degenerate) quantum states $\psi_1$ and $\psi_2$, we have $pr(\lambda\mid P_{\psi_1})pr(\lambda\mid P_{\psi_2}) = 0$ for all $\lambda\in\Lambda$.
\end{definition}

\begin{definition} If an ontological model fails to be $\psi$-ontic, then it is said to be \textbf{$\psi$-epistemic}.
\end{definition}

The upper graph of figure (\ref{on-ep}) depicts the ontic state space of a $\psi$-ontic ontological model, in which, there is no shared $\lambda$ between two preparation procedures. The lower graph shows a $\psi$-epistemic ontological model. In $\psi$-ontic models, two operationally distinct states (say $\ket{x+}$ and $\ket{z+}$) are ontologically distinct too. In $\psi$-epistemic models, there exists (at least) one ontic state which is compatible with two different operational states.

\begin{figure}
\centering
\vspace{-2cm}
\includegraphics[width=0.6\textwidth]{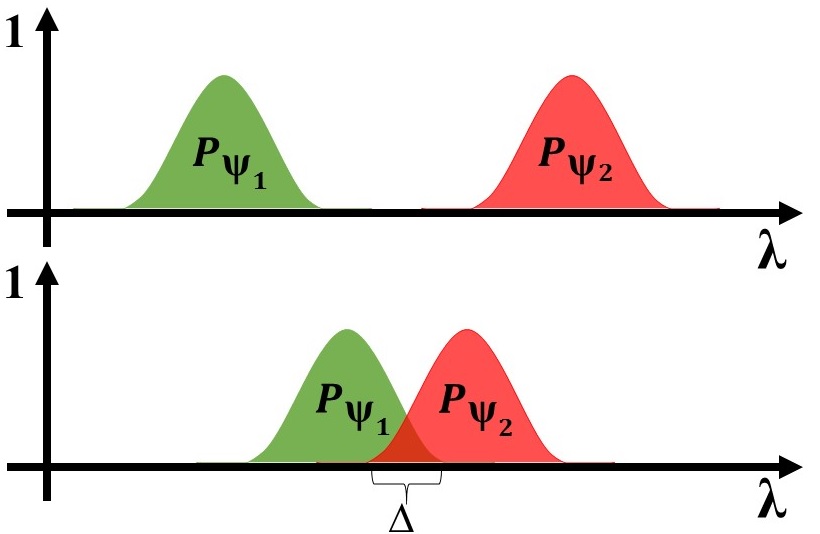}
\caption{\label{on-ep}A demonstration of $\psi$-ontic (upper graph) and $\psi$-epistemic (lower graph) ontological models. In $\psi$-epistemic models, the common support of ontic states has a non-zero measure.}  
\end{figure}

So far, all of our statements about OMF have considered the single-party $PM$ scenarios (a \textit{single} system is assumed to be prepared and then measured). In this sense, OMF aims to provide the most general ontological structure behind the operational $PM$ scenario. However, most of the problematic scenarios in foundations of QM are multipartite ones. For instance, the Bell and the PBR scenarios are both about two systems which are going to be joined to create a single composite system. 

\begin{proposition}\label{bi-omf} 
For systems $A$ and $B$ whose hypothetical composite system $AB$ is (operationally) prepared and measured by $P_{AB}$ and $M^{AB}$, \textbf{bipartite OMF} posits an ontic state space $\Lambda_{AB}$ and prescribes a probability distribution over $\Lambda_{AB}$ for every preparation procedure $P_{AB}$, denoted by a density function $pr(\lambda_{AB}\mid P_{AB})$, and a probability distribution over the different outcomes $m^{AB}$ of a measurement $M^{AB}$ for every ontic state $\lambda_{AB}\in\Lambda_{AB}$, denoted by a response function $pr(m^{AB}\mid\lambda_{AB}, M^{AB})$. Finally, for all $P_{AB}$ and $M^{AB}$, it will satisfy the following three conditions   
\beq \int_{\Lambda_{AB}} pr(\lambda_{AB}\mid P_{AB}) d\lambda_{AB}=1 \eeq
\beq\label{key-formula} 
\sum_{m^{AB}}{pr(m^{AB}\mid\lambda_{AB}, M^{AB})}=1 
\eeq 
\begin{equation}
\int_{\Lambda_{AB}} pr(\lambda_{AB}\mid P_{AB}) pr(m^{AB}\mid\lambda_{AB},M^{AB}) d\lambda_{AB} = pr(m^{AB}\mid P_{AB},M^{AB}) = Tr(M^{AB}_m\rho_{AB})
\end{equation}
where $\rho_{AB}$ is the density operator associated with $P_{AB}$ and $M^{AB}_m$ is the POVM element associated with the $m$-th outcome of measurement $M^{AB}$.
\end{proposition}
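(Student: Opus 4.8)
The plan is to recognise that Proposition~\ref{bi-omf} is not an independent claim but an instantiation of the single-system Ontological Models Framework: a composite system $AB$ is, operationally speaking, simply a physical system carried by the Hilbert space $\mathcal{H}_{AB}=\mathcal{H}_A\otimes\mathcal{H}_B$. So the first step is to invoke the Operational QM axioms for this Hilbert space, which guarantee that the preparation $P_{AB}$ is represented by a trace-one positive operator $\rho_{AB}$ on $\mathcal{H}_{AB}$, that the measurement $M^{AB}$ is represented by a POVM $\{M^{AB}_m\}$ with $\sum_m M^{AB}_m = I_{AB}$, and that $pr(m^{AB}\mid P_{AB},M^{AB})=Tr(M^{AB}_m\rho_{AB})$.

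Second, I would apply the OMF definition verbatim to this system, reading ``the system'' as ``$AB$''. This supplies an ontic state space $\Lambda_{AB}$, a preparation density function $pr(\lambda_{AB}\mid P_{AB})$, and a response function $pr(m^{AB}\mid\lambda_{AB},M^{AB})$. The three displayed conditions of the proposition are then literally the three conditions of the single-party OMF definition after the substitutions $\Lambda\mapsto\Lambda_{AB}$, $P\mapsto P_{AB}$, $M\mapsto M^{AB}$, $m\mapsto m^{AB}$, $\rho\mapsto\rho_{AB}$, $M_m\mapsto M^{AB}_m$: normalisation of the preparation distribution, normalisation of the response function over outcomes, and the averaging constraint reproducing the operational (Born-rule) statistics. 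No computation is required beyond this transcription.

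The only point that deserves care — and the only place a reader might expect more — is that nothing here is assumed about the internal structure of $\Lambda_{AB}$ relative to the marginal ontic spaces $\Lambda_A$ and $\Lambda_B$ of the constituents. Bipartite OMF as stated is agnostic about whether $\Lambda_{AB}=\Lambda_A\times\Lambda_B$, whether $pr(\lambda_{AB}\mid P_{AB})$ factorises for product preparations, or whether there exist holistic ontic states with no counterpart in $\Lambda_A\times\Lambda_B$. Those are precisely the additional commitments (the Preparation Independence Postulate) that will be layered on later, and the proof should make explicit that the three conditions hold with no such assumption. Thus the ``hard part'' is conceptual rather than technical: being clear that passing to the bipartite setting buys no new constraints by itself, only a relabelling of the single-system framework.
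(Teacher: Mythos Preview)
Your proposal is correct and matches the paper's own proof essentially verbatim: the paper simply notes that the proposition follows from the operational QM axioms together with the OMF definition by treating the composite system $AB$ as a single system and applying the single-party definition. Your additional remark that no assumption about the internal structure of $\Lambda_{AB}$ is needed is accurate and in the spirit of the paper, which indeed introduces PIP only afterwards as a separate postulate.
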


\begin{proof} This proposition is derivable from the axioms of operational QM, plus the OMF definition. Consider two systems $A$ and $B$, for which we don't know anything regarding their preparation and measurement procedures. According to the OMF definition, define an ontological model for their ``single" composite system $AB$.
\end{proof}

\begin{definition}
Consider two pure states, $\psi_A$ and $\psi_B$, prepared independently at two space-like separated points. From the OMF definition, these operational states correspond to the ontic states $\lambda_A\in\Lambda_A$ and $\lambda_B\in\Lambda_B$, respectively. If we combine these two systems to create the composite system $AB$, which is described operationally by $\psi_{AB} = \psi_A\otimes\psi_B$, the Preparation Independence Postulate \textbf{(PIP in OMF)} assumes that the ontic state of this composite system can be modeled in the following way. 
\begin{equation}\label{PIP-In-OMF-1}
\Lambda_{AB}=\Lambda_A\times\Lambda_B \equiv \lambda_{AB}=(\lambda_{A},\lambda_{B})
\end{equation}
\begin{equation}\label{PIP-In-OMF-2}
pr( \lambda_{AB} \mid \psi_{AB} ) = pr( \lambda_A \mid \psi_A) pr( \lambda_B \mid \psi_B)
\end{equation} 
where $\lambda_{AB}\in\Lambda_{AB}$ is the ontic state of the composite system $AB$.
\end{definition}

\subsection{The PBR Theorem in OMF}

Imagine Alice and Bob are living in two space-like separated regions, and they prepare their quantum states via the preparation procedures $P_A$ and $P_B$ such that there is not any operational correlation between these two processes. In the original example of PBR, Alice can choose to prepare her state in either $\ket{0_A}$ or $\ket{+_A}$. Similarly, Bob can prepare his state in either $\ket{0_B}$ or $\ket{+_B}$ (PBR express these statements, by demanding two copies of the same prepared system).

According to figure (\ref{on-ep}), an arbitrary $\psi$-epistemic ontological model implies that Alice and Bob have the overlaps $\Delta_A$ and $\Delta_B$ in the probability distributions over their ontic state spaces $\Lambda_{A}$ and $\Lambda_{B}$. Since these people are living in two space-like separated regions, there is not any operational correlation between their preparation procedures. Hence, according to the operational QM the following relation holds, 
\beq\label{p-op}
\psi_{AB} = \psi_A\otimes\psi_B.
\eeq 
So, there are four possibilities for preparing the composite system AB,
\beq
\psi_{AB}\in \{\psi^1_{AB},\psi^2_{AB},\psi^3_{AB},\psi^4_{AB}\}
\eeq where $\psi^1_{AB}=\ket{0_A}\otimes\ket{0_B}$, $\psi^2_{AB}=\ket{0_A} \otimes\ket{+_B}$, $\psi^3_{AB}=\ket{+_A}\otimes\ket{0_B}$ and $\psi^4_{AB}=\ket{+_A}\otimes\ket{+_B}$.

As it can be seen in figure (\ref{com-sup}), in general, the support of the composite ontic state $\lambda_{AB}$ is composed of nine mutually exclusive sub-regions. 
\begin{equation}\label{similarity}
supp(\lambda_{AB}\mid\psi_{AB}) = R_{11}\sqcupdot R_{12}\sqcupdot ... \sqcupdot R_{33} 
\end{equation}

Also, note that corresponding to each operational state $\psi^i_{AB}$ there exists a particular region of $\Lambda_{AB}$, to which the ontic state $\lambda_{AB}$ will be restricted. For instance, if the composite system has been prepared in $\psi^1_{AB}=\ket{0_A}\otimes\ket{0_B}$, one can infer that Alice and Bob have prepared their systems in $\ket{0_A}$ and $\ket{0_B}$, respectively. In this situation, $\lambda_{AB}$ will be restricted to the region $R_{31}\sqcupdot R_{32}\sqcupdot R_{21}\sqcupdot R_{22}$. In general, we have the following four formulas. 
\beq\label{OMF-Supports}
\begin{cases}
supp(\lambda_{AB}\mid \psi^1_{AB}) = R_{21}\sqcupdot R_{22}\sqcupdot R_{31}\sqcupdot R_{32}\\
supp(\lambda_{AB}\mid \psi^2_{AB}) = R_{11}\sqcupdot R_{12}\sqcupdot R_{21}\sqcupdot R_{22}\\
supp(\lambda_{AB}\mid \psi^3_{AB}) = R_{22}\sqcupdot R_{23}\sqcupdot R_{32}\sqcupdot R_{33}\\
supp(\lambda_{AB}\mid \psi^4_{AB}) = R_{12}\sqcupdot R_{13}\sqcupdot R_{22}\sqcupdot R_{23}\\
\end{cases}
\eeq

From the above equations, it is evident that $R_{22}$ is the only region of $\Lambda_{AB}$ which is repeating in all of the four equations. This means that $R_{22}$ is compatible with all of the four possible preparation procedures of $\psi_{AB}$.    

While the first assumption of PIP (\ref{PIP-In-OMF-1}) implies that $R_{22}=\Delta_A\times\Delta_B$, its second assumption (\ref{PIP-In-OMF-2}) implies that the ontic state $\lambda_{AB}=(\lambda_A,\lambda_B)$ of each run of the experiment will be chosen from $R_{22}$ with the non-zero probability $q=\frac {|\Delta_A\times\Delta_B|}{|supp(\lambda_{AB}\mid\psi_{AB})|}>0 $. As we will see in the next pages, $\psi$-epistemic models cannot reproduce PBR statistics because of the appearance of $R_{22}$. In other words, $R_{22}$ is a problematic region in $\psi$-epistemic models. 

So far, we have reviewed many details of PBR preparation procedure. Before explicitly expressing the main theorem, PBR measurement procedure should also be known. PBR define their measurement procedure to be $M^{AB}=\{ M^{AB}_1,M^{AB}_2,M^{AB}_3,M^{AB}_4 \}$, which is defined as follows.

\beq
\begin{aligned}
M^{AB}_1=\frac{1}{\sqrt{2}}(\Ket{0_A}\Ket{1_B}+\Ket{1_A}\Ket{0_B})\\
M^{AB}_2=\frac{1}{\sqrt{2}}(\Ket{0_A}\Ket{-_B}+\Ket{1_A}\Ket{+_B})\\
M^{AB}_3=\frac{1}{\sqrt{2}}(\Ket{+_A}\Ket{1_B}+\Ket{-_A}\Ket{0_B})\\
M^{AB}_4=\frac{1}{\sqrt{2}}(\Ket{+_A}\Ket{-_B}+\Ket{-_A}\Ket{+_B})
\end{aligned}
\eeq

It can be verified that elements of the above set constitute a complete, two-by-two orthogonal basis corresponding to four possible outcomes of a measurement device $M^{AB}$. This measurement procedure is entangled, meaning that the elements of this set cannot be separated into two parts, $M^{AB}_i \neq (M^A_i, M^B_i)$.        

\begin{theorem}[\textbf{PBR Theorem}] Assuming PIP, the (operational) PBR scenario cannot be (ontologically) reproduced by $\psi$-epistemic ontological models.
\end{theorem}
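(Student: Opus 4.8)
The plan is to exploit the single structural fact that makes the PBR construction work: each outcome $M^{AB}_i$ of the entangled measurement is orthogonal to exactly one of the four product preparations $\psi^i_{AB}$. First I would check by a direct inner-product computation that $\langle\psi^1_{AB}\mid M^{AB}_1\rangle = \langle\psi^2_{AB}\mid M^{AB}_2\rangle = \langle\psi^3_{AB}\mid M^{AB}_3\rangle = \langle\psi^4_{AB}\mid M^{AB}_4\rangle = 0$; for instance $\langle 0_A 0_B\mid M^{AB}_1\rangle = \tfrac{1}{\sqrt2}\bigl(\langle 0\mid 0\rangle\langle 0\mid 1\rangle + \langle 0\mid 1\rangle\langle 0\mid 0\rangle\bigr) = 0$, and the remaining three are entirely analogous. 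Hence, writing $\rho^i_{AB}$ for the density operator of $P^i_{AB}$, operational QM gives $pr(i\mid P^i_{AB}, M^{AB}) = Tr(M^{AB}_i\rho^i_{AB}) = 0$ for every $i\in\{1,2,3,4\}$.

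Next I would push this vanishing probability down to the ontic level using the bipartite OMF identity of Proposition \ref{bi-omf}: for each $i$,
\beq
\int_{\Lambda_{AB}} pr(\lambda_{AB}\mid P^i_{AB})\, pr(i\mid\lambda_{AB}, M^{AB})\, d\lambda_{AB} = 0.
\eeq
Both factors of the integrand are non-negative, so the product must vanish almost everywhere; in particular $pr(i\mid\lambda_{AB}, M^{AB}) = 0$ for almost every $\lambda_{AB}$ in the support of $pr(\cdot\mid P^i_{AB})$.

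Then I would invoke PIP together with the $\psi$-epistemic hypothesis. Being $\psi$-epistemic, the model has Alice's two distributions overlapping on a set $\Delta_A\subseteq\Lambda_A$ of positive measure and Bob's on $\Delta_B\subseteq\Lambda_B$ likewise. By the first clause of PIP, $\Lambda_{AB}=\Lambda_A\times\Lambda_B$ and $R_{22}=\Delta_A\times\Delta_B$ has positive measure; by the second clause, $pr(\lambda_{AB}\mid P^i_{AB}) = pr(\lambda_A\mid\psi_A)\,pr(\lambda_B\mid\psi_B)$ (with $\psi_A,\psi_B$ the component states of $\psi^i_{AB}$) is strictly positive on $R_{22}$ for each of the four $i$, since each marginal is positive on $\Delta_A$ (resp.\ $\Delta_B$) by definition of the overlap. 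Combining with the previous step, $pr(i\mid\lambda_{AB},M^{AB}) = 0$ for almost every $\lambda_{AB}\in R_{22}$ and for all four $i$ simultaneously. But equation \eqref{key-formula} forces $\sum_{i=1}^{4} pr(i\mid\lambda_{AB},M^{AB}) = 1$ pointwise, so on the positive-measure set $R_{22}$ we obtain $0 = 1$, the desired contradiction.

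I expect the only delicate point to be the measure-theoretic bookkeeping in the third step: the four integral constraints each carry their own null exceptional set, and one must observe that their union is still null, so a positive-measure subset of $R_{22}$ survives on which all four response values vanish. A secondary point worth stating explicitly is why $R_{22}$ genuinely lies in the support of all four preparations (equation \eqref{OMF-Supports}): this is exactly the role of the overlap being a \emph{product} region under PIP. Without PIP one could not conclude that the four supports share a common positive-measure piece, the argument of the third step would break, and the contradiction would evaporate — which is precisely the tension the remainder of the paper is built to examine.
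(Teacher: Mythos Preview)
Your proof is correct and follows essentially the same approach as the paper's own: establish the four operational zeros $Tr(M^{AB}_i\rho^i_{AB})=0$, push them via the OMF integral identity to vanishing response functions on the respective preparation supports, identify $R_{22}=\Delta_A\times\Delta_B$ as the common overlap using PIP and the $\psi$-epistemic hypothesis, and contradict the normalization condition \eqref{key-formula}. Your treatment is in fact more careful than the paper's about the measure-theoretic bookkeeping (the union of the four null exceptional sets still being null), but this is a refinement of the same argument rather than a different route.
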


\begin{proof} To clearly show that PBR's scenario is not reproducible by any $\psi$-epistemic ontological models, let us express the arguments step-by-step.

\textbf{Step 1.} Consider a $\psi$-epistemic ontological model which is supposed to reproduce our operational statistics. Placing the PIP assumption in proposition (\ref{bi-omf}) results in the following equations for the PBR composite system.

\begin{equation}\label{OMF.7}
\begin{aligned}
p&r(m^{AB}\mid P_{AB},M^{AB}) = Tr(M^{AB}_m\rho_{AB}) =  \\
\int_{\Lambda_{AB}} &d\lambda_{AB}pr(\lambda_{AB}\mid P_{AB})pr(m^{AB}\mid\lambda_{AB},M^{AB}) = \\
\int_{\Lambda_A}\int_{\Lambda_B}&d\lambda_A d\lambda_B pr(\lambda_A\mid P_A)pr(\lambda_B\mid P_B) pr(m^{AB}\mid M^{AB},\lambda_A,\lambda_B)
\end{aligned}
\end{equation}

\begin{figure}
\centering
\vspace{-1.5cm}
\includegraphics[width=0.7\textwidth]{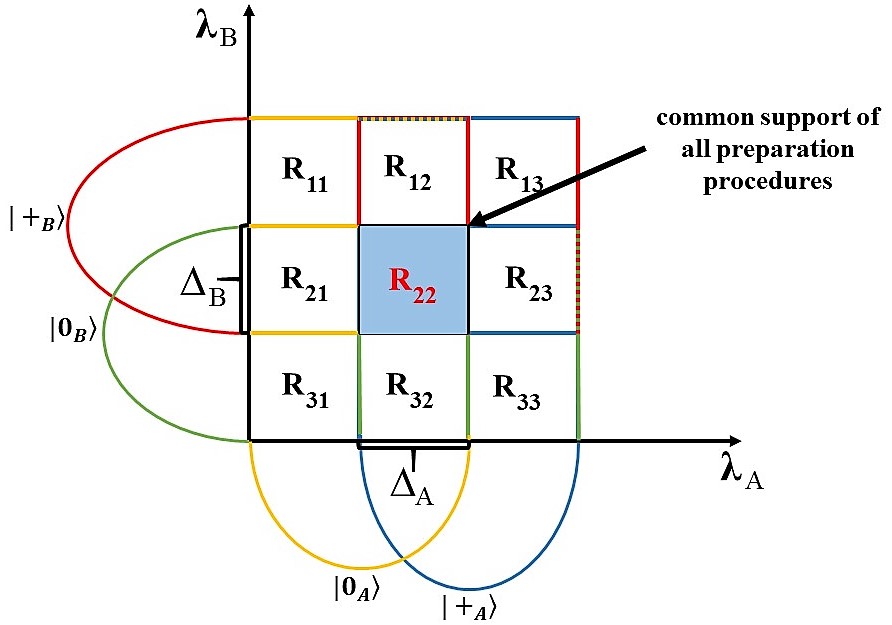}
\caption{\label{com-sup} The ontic state space $\Lambda_{AB}$ of an arbitrary $\psi$-epistemic ontological model. In general, the ontic state $\lambda_{AB}$ can be chosen from nine non-overlapped regions. The states coming from the common region $R_{22}=\Delta_A\times\Delta_B$ are compatible with all the four possible preparation procedures.}  
\end{figure}

\textbf{Step 2.} Our ontological model must be able to reproduce compatible statistics with the following two observations.

\textbf{Observation 1}: The measurement device can select only the outcomes which are not orthogonal to the prepared state. So the probability of choosing an outcome which is orthogonal to the prepared state is zero. This observation leads to four equations at the operational level of QM. 

\beq\label{equ-oper 2}\textup{operational level}
\begin{cases}
pr(m^{AB}=1\mid \psi_{AB}=\ket{0_A 0_B},M^{AB})=Tr(M^{AB}_1 \rho^{1}_{AB})=0\\
pr(m^{AB}=2\mid \psi_{AB}=\ket{0_A +_B},M^{AB})=Tr(M^{AB}_2 \rho^{2}_{AB})=0\\
pr(m^{AB}=3\mid \psi_{AB}=\ket{+_A 0_B},M^{AB})=Tr(M^{AB}_3 \rho^{3}_{AB})=0\\
pr(m^{AB}=4\mid \psi_{AB}=\ket{+_A +_B},M^{AB})=Tr(M^{AB}_4 \rho^{4}_{AB})=0\\
\end{cases}
\eeq
where $\rho^{1}_{AB}=\dyad{0_A 0_B}{0_A 0_B}$, $\rho^{2}_{AB}=\dyad{0_A +_B}{0_A +_B}$, $\rho^{3}_{AB}=\dyad{+_A 0_B}{+_A 0_B}$, and $\rho^{4}_{AB}=\dyad{+_A +_B}{+_A +_B}$. 

Combining the above results with equation (\ref{OMF.7}), one can convert this set of four operational equations into the following set of equations at the ontic level of QM. 

\beq\label{equ.ontoligcal 2}\textup{ontic level}
\begin{cases}
pr(m^{AB}=1\mid \lambda_A,\lambda_B, M^{AB})=0, \hspace{2mm} when \hspace{2mm} \psi_{AB}=\ket{0_A 0_B} \\
pr(m^{AB}=2\mid \lambda_A,\lambda_B, M^{AB})=0, \hspace{2mm} when \hspace{2mm} \psi_{AB}=\ket{0_A +_B} \\
pr(m^{AB}=3\mid \lambda_A,\lambda_B, M^{AB})=0, \hspace{2mm} when \hspace{2mm} \psi_{AB}=\ket{+_A 0_B} \\
pr(m^{AB}=4\mid \lambda_A,\lambda_B, M^{AB})=0, \hspace{2mm} when \hspace{2mm} \psi_{AB}=\ket{+_A +_B} 
\end{cases}
\eeq

Also, as explained in (\ref{OMF-Supports}), each preparation procedure of the composite system corresponds to a particular part of $\Lambda_{AB}$. Hence, domains of the last set of equations are as follows. 

\beq\label{equ.ontoligcal 3}\textup{ontic level}
\begin{cases}
pr(m^{AB}=1\mid \lambda_A,\lambda_B, M^{AB})=0, \hspace{2mm} when \hspace{2mm} \lambda_{AB}\in R_{21}\sqcupdot \textcolor{red}{R_{22}}\sqcupdot R_{31}\sqcupdot R_{32}\\
pr(m^{AB}=2\mid \lambda_A,\lambda_B, M^{AB})=0, \hspace{2mm} when \hspace{2mm} \lambda_{AB}\in R_{11}\sqcupdot R_{12}\sqcupdot R_{21}\sqcupdot \textcolor{red}{R_{22}}\\
pr(m^{AB}=3\mid \lambda_A,\lambda_B, M^{AB})=0, \hspace{2mm} when \hspace{2mm} \lambda_{AB}\in \textcolor{red}{R_{22}}\sqcupdot R_{23}\sqcupdot R_{32}\sqcupdot R_{33}\\
pr(m^{AB}=4\mid \lambda_A,\lambda_B, M^{AB})=0, \hspace{2mm} when \hspace{2mm} \lambda_{AB}\in R_{12}\sqcupdot R_{13}\sqcupdot \textcolor{red}{R_{22}}\sqcupdot R_{23}\\
\end{cases}
\eeq

\textbf{Observation 2}: According to equation (\ref{key-formula}), regardless of which quantum state has been prepared, the sum of the response functions must be equal to one.   

\beq\label{equ-oper 1}
\Sigma^{4}_{i= 1} {pr(m^{AB} = i \mid \lambda_{A}, \lambda_{B}, M^{AB})} = 1 
\eeq

\textbf{Step 3.} In this step, we check whether our model can satisfy our observations, in equations (\ref{equ.ontoligcal 3}) and (\ref{equ-oper 1}), consistently or not. First of all, remember that in $\psi$-epistemic models the ontic state $\lambda_{AB}$ comes from the the overlap region $R_{22}=\Delta_A\times\Delta_B$ at $q=\frac {|\Delta_A\times\Delta_B|}{|supp(\lambda_{AB}\mid\psi_{AB})|}$ of the times. Hence, regardless of which preparation procedure will be chosen by Alice and Bob, there is always the possibility of having the ontic states being in this region $\lambda_{AB}\in\Delta_A\times\Delta_B$. In what follows, we will see that it is impossible to simultaneously satisfy (\ref{equ.ontoligcal 3}) and (\ref{equ-oper 1}), exactly because of the appearance of $R_{22}$

Now, add the left and the right sides of the four equations of (\ref{equ.ontoligcal 3}) to each other, respectively. This leads to the following equation.
\begin{equation}\label{contradiction}
\Sigma^{4}_{i= 1} {pr(m^{AB} = i\mid\lambda_{A},\lambda_{B}, M^{AB})} = 0 \neq 1, \hspace{2mm} when \hspace{2mm} \lambda_{AB}\in \textcolor{red}{R_{22}}
\end{equation}

Comparing (\ref{equ-oper 1}) and (\ref{contradiction}), one can proof the PBR theorem; At $q$ of the times, when the ontic state $\lambda_{AB}\in R_{22}$, a $\psi$-epistemic model cannot consistently reproduce our demanded statistics.       
\end{proof}

\begin{remark}
According to the above calculations, it seems to be impossible to reproduce the statistics of the PBR scenario with $\psi$-epistemic models. PBR concluded that this contradiction reveals that $\psi$ should not be interpreted as a real epistemic object. However, in the following pages, it will be shown that this contradiction appears because there is a problem in the definition of OMF, and this contradiction has nothing to do with the nature of the wavefunction.
\end{remark}

\section{MOMF and the PBR Theorem}

A careful investigation of the given proof of the PBR theorem within OMF suggests that the problem of PBR's theorem is neither due to the nature of $\psi$ nor due to the $\psi$-epistemic models. Rather, it seems that OMF is not capable enough in describing the measurement procedures performed on the \textit{entangled} basis. According to the OMF's definition, the response functions depend on the \textit{ontic} states of the preparation device $\{\lambda\}$ and the \textit{operational} states of the measurement apparatus $\{M_m\}$. This means that, although OMF considers the ontology of the preparation procedure, asymmetrically it does not consider the ontology of the measurement device. Such a treatment becomes problematic, especially in scenarios in which measurement devices have a quantum nature. In such circumstances, the preparation, and the measurement devices create a composite system whose behavior depends on the ontic states of both $P$ and $M$. Let us denote the corresponding ontic states of $P$ and $M$ by $\lambda^P$ and $\lambda^M$, respectively. 

Needless to say, the measurement procedure of PBR's scenario is an entangled one implying that the quantum nature of $M$ is crucially important. Allowing the response functions to depend on $\lambda^P$ and $\lambda^M$, in the following subsections, we will find that there is no contradiction between the PBR operational statistics and the predictions of the $\psi$-epistemic models. 

\subsection{Modified Ontological Models Framework}

\begin{definition}\label{MOMF-First}\textbf{MOMF} posits an ontic state space $\Lambda^{PM}$ whose elements $\lambda^{PM}\in\Lambda^{PM}$ are ontologically responsible for the results of the experiment. The following three conditions must be satisfied in this framework.
\beq\int_{\Lambda^{PM}} pr(\lambda^{PM}\mid P , M) d\lambda^{PM} = 1\eeq
\beq \sum_{m} pr(m\mid\lambda^{PM}) = 1 \eeq  
\beq\label{MOMF-def}
\int_{\Lambda^{PM}} d\lambda^{PM} pr (\lambda^{PM} \mid P, M) pr (m \mid \lambda^{PM}) = pr (m \mid P, M) = Tr(M_m \rho)
\eeq 
\end{definition}

The above definition is too general and even useless in most of the cases. However, putting some reasonable constraints on this definition converts it into a more useful form. For instance, one can reasonably assume that $P$ and $M$ are operationally uncorrelated. Then, by exploiting an auxiliary assumption which mirrors the operational independence of $P$ and $M$ to the ontological level, the following statement can be obtained.

\begin{definition}\label{PM-sep} 
Within MOMF, in the absence of operational correlations between the preparation and measurement procedures, the \textbf{PM-Separability} assumption posits that the ontic state spaces of these devices, $\Lambda^P$ and $\Lambda^M$, are ontologically separable. This assumption is expressible as the conjunction of the two following sub-assumptions:  
\beq\label{A1}\Lambda^{PM}=\Lambda^P\times\Lambda^M\equiv\lambda^{PM}=(\lambda^P,\lambda^M)\eeq
\beq\label{A2} pr(\lambda^{PM}\mid P, M)=pr(\lambda^P\mid P)pr(\lambda^M\mid M)\eeq 
\end{definition}

The ideas of the PM-Separability assumption and the PIP assumption are very similar; in the absence of operational correlations between two events, one may assume that these events are ontologically uncorrelated too. As will be explained later, such auxiliary assumptions are not necessarily reasonable. However, the purpose of this paper is to show that even if we accept PBR's assumption, their conclusion does not follow.

\begin{proposition}\label{MOMF_past} Making the PM-Separability assumption has the following result. MOMF posits two ontic state spaces $\Lambda^P$, $\Lambda^M$ and prescribes two probability distributions over them for all states of $P$ and $M$, denoted density functions $pr(\lambda^P\mid P)$ and $pr(\lambda^M \mid M)$ and a probability distribution over the different outcomes $m$ of a measurement $M$ for every pair of ontic states $(\lambda^P,\lambda^M) \in\Lambda^P\times\Lambda^M$, denoted response function $pr(m\mid\lambda^P,\lambda^M)$. Finally, for all $P$ and $M$, it must satisfy the four following conditions

\beq\int_{\Lambda^P} pr(\lambda^P\mid P) d\lambda^P = 1\eeq

\beq\int_{\Lambda^M} pr(\lambda^M\mid M) d\lambda^M = 1\eeq

\beq\sum_{m} pr(m\mid\lambda^P,\lambda^M) =1\eeq  

\beq\label{two-density}
\begin{aligned}
   &pr(m \mid P,M) = Tr(M_m \rho) =\\
   \int_{\Lambda^P}\int_{\Lambda^M} d\lambda^P &d\lambda^M \underbrace{pr(\lambda^P \mid P) pr(\lambda^M \mid M)}_{\text{density functions}} \underbrace{pr(m \mid \lambda^P, \lambda^M)}_{\text{response function}}
\end{aligned}
\eeq
where $\rho$ is the density operator associated with $P$ and $M_m$ is the POVM element associated with the $m$-th outcome of measurement $M$.
\end{proposition}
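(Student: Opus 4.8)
The plan is to obtain the four displayed conditions by substituting the two PM-Separability sub-assumptions~(\ref{A1}) and~(\ref{A2}) directly into the three defining conditions of MOMF in Definition~\ref{MOMF-First}, in exactly the spirit in which Proposition~\ref{bi-omf} was derived from the OMF definition. No new physical input is needed beyond the identification of $\Lambda^{PM}$ with a product space; the argument is essentially bookkeeping.

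First I would use~(\ref{A1}) to identify $\Lambda^{PM}$ with $\Lambda^P\times\Lambda^M$ and a generic ontic state $\lambda^{PM}$ with the pair $(\lambda^P,\lambda^M)$. Under this identification the single integral $\int_{\Lambda^{PM}}d\lambda^{PM}$ is rewritten as the iterated integral $\int_{\Lambda^P}\int_{\Lambda^M}d\lambda^P d\lambda^M$, the response function $pr(m\mid\lambda^{PM})$ becomes $pr(m\mid\lambda^P,\lambda^M)$, and hence the third MOMF condition $\sum_m pr(m\mid\lambda^{PM})=1$ immediately yields $\sum_m pr(m\mid\lambda^P,\lambda^M)=1$ for every pair $(\lambda^P,\lambda^M)\in\Lambda^P\times\Lambda^M$.

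Next I would insert~(\ref{A2}), namely $pr(\lambda^{PM}\mid P,M)=pr(\lambda^P\mid P)\,pr(\lambda^M\mid M)$, into the normalization condition $\int_{\Lambda^{PM}}pr(\lambda^{PM}\mid P,M)\,d\lambda^{PM}=1$. The integrand factorizes, so the left side equals $(\int_{\Lambda^P}pr(\lambda^P\mid P)\,d\lambda^P)(\int_{\Lambda^M}pr(\lambda^M\mid M)\,d\lambda^M)=1$. Since, by the very statement of PM-Separability, $pr(\lambda^P\mid P)$ and $pr(\lambda^M\mid M)$ are probability densities on their respective spaces, each factor is separately equal to $1$, which gives the first two displayed conditions. (Equivalently, one may simply take the separate normalization of each marginal as part of what it means for these two functions to be probability distributions; the product identity above is then a consistency check rather than a derivation.)

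Finally, substituting both~(\ref{A1}) and~(\ref{A2}) into~(\ref{MOMF-def}) gives $pr(m\mid P,M)=\int_{\Lambda^{PM}}d\lambda^{PM}\,pr(\lambda^{PM}\mid P,M)\,pr(m\mid\lambda^{PM})$, which after the factorization becomes $\int_{\Lambda^P}\int_{\Lambda^M}d\lambda^P d\lambda^M\,pr(\lambda^P\mid P)\,pr(\lambda^M\mid M)\,pr(m\mid\lambda^P,\lambda^M)$, i.e.\ precisely the right-hand side of~(\ref{two-density}); the equality with $Tr(M_m\rho)$ carries over unchanged from Definition~\ref{MOMF-First}. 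I expect the only genuine obstacle to be the measure-theoretic step hidden in the rewriting of $\int_{\Lambda^{PM}}$ as an iterated integral: one must check that the posited factorization of $pr(\lambda^{PM}\mid P,M)$ is compatible with a bona fide product measure on $\Lambda^P\times\Lambda^M$, so that a Fubini-type interchange is legitimate. Everything else is routine substitution.
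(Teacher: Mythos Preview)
Your proposal is correct and follows exactly the approach of the paper, which simply states that plugging the PM-Separability assumption into the MOMF definition yields the proposition. Your write-up merely makes explicit the substitutions that the paper leaves implicit, and your Fubini caveat is a reasonable technical aside that the paper does not mention.
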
 

\begin{proof}
Plugging the PM-separability assumption in MOMF definition leads to this proposition. 
\end{proof}

Equation (\ref{two-density}) shows the reasonableness of the previous statements; that for simulating the ontology of an operational scenario, one needs both ontic states $\lambda^P$ and $\lambda^M$.  

Although MOMF approaches symmetrically toward the preparation and measurement procedure, there is a conceptual difference between the forms of $\Lambda^P$ and $\Lambda^M$. In principle, it is possible for different preparation methods to share some ontic states in their supports (this is the definition of $\psi$-epistemic models). However, in the measurement ontic state space $\Lambda^M$, this statement cannot hold, meaning that different regions of the ontic space (corresponding to different outcomes) are disjoint. Note that the outcomes of a measurement device correspond to different orthogonal POVM. Hence, from the operational points of view, the probability of being in an (operational) state which is compatible with two orthogonal POVMs is equal to zero. This operational claim can be confirmed by our daily experience, where we have never seen a measurement's pointer being in two different outcomes at the same time. From the ontological points of view, one should note that $\Lambda^M$ must be composed of several non-overlapping regions, corresponding to several measurement outcomes. Otherwise, we could have a set of shared ontic states $\lambda^M$,  each of which being (ontologically) compatible with (operationally) incompatible states. Hence, via mirroring the observations coming from the operational level of QM into the ontological level, we have the following assumption. Figure (\ref{measurement}) depicts this situation. 

\begin{assumption}
The ontic state space of a measurement device, $\Lambda^M$, is composed of different non-overlapping regions, $\Delta^{M}_{i}$, corresponding to its several outcomes. A measurement device with $n$ possible outcomes satisfies the following condition.
\beq supp (\lambda^M) =\Delta^M_{1}\sqcupdot\Delta^M_{2}\sqcupdot ... \sqcupdot\Delta^M_{n}\eeq
\end{assumption}

\begin{figure}
\centering
\vspace{-2cm}
\includegraphics[width=0.8\textwidth]{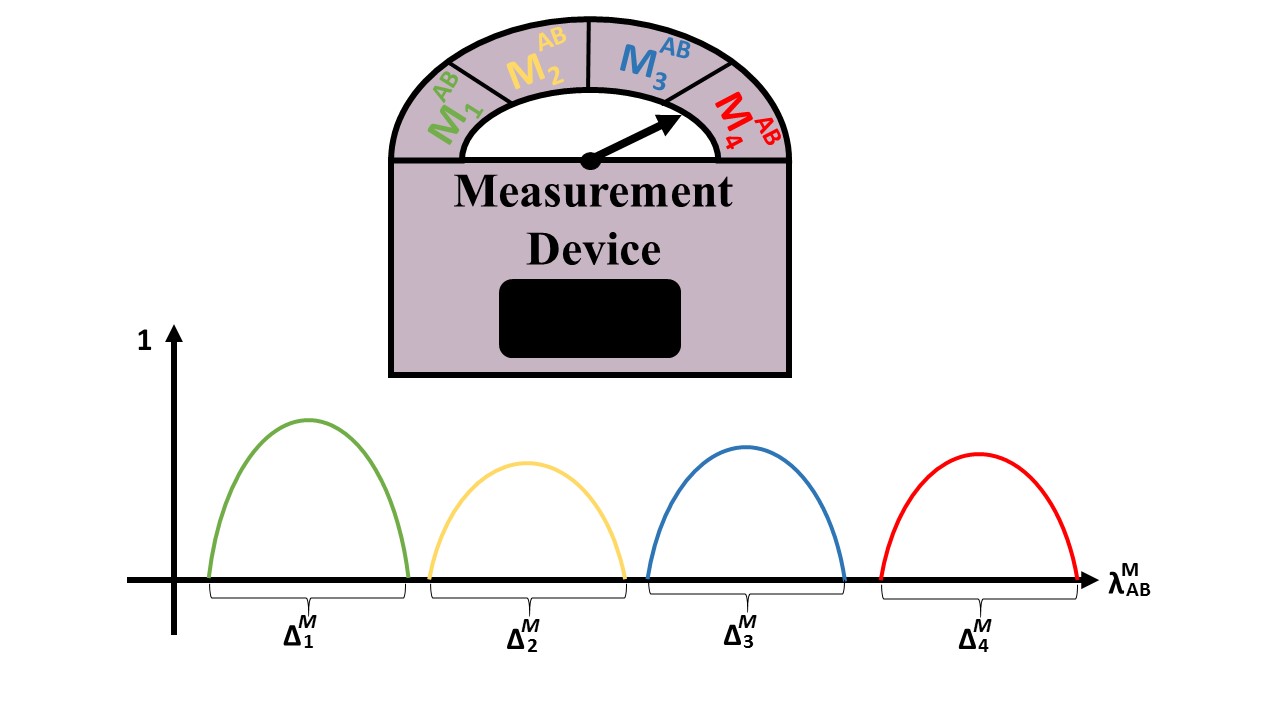}
\vspace{-5mm}
\caption{\label{measurement}The ontic space of a measurement device is composed of several non-overlapping regions corresponding to its several outcomes. In this picture, the measurement device has four outcomes whose corresponding supports, $\Delta^M_i$, are disjoint.}  
\end{figure}    

\begin{proposition}
For systems $A$ and $B$ whose hypothetical composite system $AB$ is (operationally) prepared and measured by $P_{AB}$ and $M^{AB}$, \textbf{bipartite MOMF} posits two ontic state spaces $\Lambda_{AB}^P$, $\Lambda_{AB}^M$ and prescribes two probability distributions over them for every $P_{AB}$ and $M^{AB}$, denoted density functions $pr(\lambda^P_{AB}\mid P_{AB})$ and $pr(\lambda^M_{AB} \mid M^{AB})$ and a probability distribution over the different outcomes $m^{AB}$ of a measurement $M^{AB}$ for every pair of ontic states $(\lambda^P_{AB},\lambda^M_{AB}) \in\Lambda^P_{AB}\times\Lambda^M_{AB}$, denoted response function $pr(m^{AB}\mid\lambda^P_{AB},\lambda^M_{AB})$. Finally, for all $P_{AB}$ and $M^{AB}$, it must satisfy all the following conditions

\beq\int_{\Lambda^P_{AB}} pr(\lambda^P_{AB}\mid P_{AB}) d\lambda^P_{AB} = 1\eeq

\beq\int_{\Lambda^M_{AB}} pr(\lambda^M_{AB}\mid M^{AB}) d\lambda^M_{AB} = 1\eeq

\beq\label{sharte mohem}\sum_{m^{AB}} pr(m^{AB}\mid\lambda^P_{AB},\lambda^M_{AB}) =1\eeq  

\begin{equation}\label{bi-MOMF}
\begin{aligned}
   &pr(m^{AB} \mid P_{AB}, M^{AB}) = \\
   \int_{\Lambda^P_{AB}} \int_{\Lambda^M_{AB}} d\lambda^P_{AB} d\lambda^M_{AB} pr(&\lambda^P_{AB} \mid P_{AB}) pr(\lambda^M_{AB}\mid M^{AB}) pr(m^{AB}\mid\lambda^P_{AB},\lambda^M_{AB})
\end{aligned}
\end{equation}
\end{proposition}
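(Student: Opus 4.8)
The plan is to obtain this proposition as a specialization of machinery already in place, in exact parallel with the way Proposition~\ref{bi-omf} was derived from the single-system OMF definition. First I would note that once Alice and Bob have joined their subsystems, the composite system $AB$ is itself a single system, operationally prepared by $P_{AB}$ and measured by $M^{AB}$, with operational statistics $pr(m^{AB}\mid P_{AB},M^{AB}) = Tr(M^{AB}_m \rho_{AB})$. Applying Definition~\ref{MOMF-First} verbatim to this single system furnishes an ontic state space $\Lambda^{PM}_{AB}$, a density $pr(\lambda^{PM}_{AB}\mid P_{AB},M^{AB})$, a response function $pr(m^{AB}\mid\lambda^{PM}_{AB})$, and the three defining conditions of that definition.

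Next I would impose the PM-Separability assumption of Definition~\ref{PM-sep} on this composite system. Since Alice's and Bob's preparations carry no operational correlation with the separate measurement device, $P_{AB}$ and $M^{AB}$ are operationally uncorrelated, so the assumption applies and yields $\Lambda^{PM}_{AB} = \Lambda^P_{AB}\times\Lambda^M_{AB}$ together with $pr(\lambda^{PM}_{AB}\mid P_{AB},M^{AB}) = pr(\lambda^P_{AB}\mid P_{AB})\,pr(\lambda^M_{AB}\mid M^{AB})$. Substituting this product into the MOMF consistency relation~(\ref{MOMF-def}), writing $\lambda^{PM}_{AB} = (\lambda^P_{AB},\lambda^M_{AB})$, and relabeling the response function as $pr(m^{AB}\mid\lambda^P_{AB},\lambda^M_{AB})$ converts the single integral over $\Lambda^{PM}_{AB}$ into the double integral over $\Lambda^P_{AB}\times\Lambda^M_{AB}$ displayed in~(\ref{bi-MOMF}); the two normalization conditions and the summation condition~(\ref{sharte mohem}) follow in the same mechanical way. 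In short, this is nothing more than Proposition~\ref{MOMF_past} read with the ``system'' taken to be the composite $AB$.

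I do not expect a substantive obstacle: the proposition is a bookkeeping specialization rather than a new result, just as the one-line proofs of Propositions~\ref{bi-omf} and~\ref{MOMF_past} suggest. The one point that warrants care is conceptual rather than technical --- the factorization being asserted is between the preparation and the measurement, $\Lambda^P_{AB}\times\Lambda^M_{AB}$, and \emph{not} between the two parties $A$ and $B$; the internal $A/B$ tensor structure is kept inside each factor and is left untouched here (it is the separate concern of the PIP-type assumption), and in particular the entangled character of $M^{AB}$ is precisely what makes $\Lambda^M_{AB}$ non-trivial. Finally I would record, for use when the PBR scenario is re-run inside MOMF, that the measurement non-overlap Assumption applies to $\Lambda^M_{AB}$ with $n=4$, so that $supp(\lambda^M_{AB}) = \Delta^M_1\sqcupdot\Delta^M_2\sqcupdot\Delta^M_3\sqcupdot\Delta^M_4$.
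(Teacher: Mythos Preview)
Your proposal is correct and follows exactly the paper's own route: apply the MOMF definition~\ref{MOMF-First} to the composite system $AB$ viewed as a single system, then invoke the PM-Separability assumption~\ref{PM-sep} for $AB$ (equivalently, read Proposition~\ref{MOMF_past} with ``system'' $=AB$). Your additional remarks about the $P/M$ versus $A/B$ factorization and the $n=4$ measurement decomposition are accurate elaborations, but they go beyond what the proof of this proposition itself requires.
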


\begin{proof}
From \ref{MOMF-First}, define a MOMF model for the enlarged system $AB$. Make the PM-Separability assumption (\ref{PM-sep}) for $AB$.   
\end{proof}

\begin{definition}
Within MOMF, the Preparation Independence Postulate, \textbf{PIP in MOMF}, assumes the separability of the preparation ontic state space of the composite system $AB$, $\Lambda^P_{AB}$ in the following way.  
\beq\Lambda^P_{AB}=\Lambda ^P_A\times\Lambda ^P_B\equiv\lambda^P_{AB}=(\lambda ^P_A,\lambda ^P_B)\eeq
\beq pr(\lambda^P_{AB}\mid P_{AB})=pr(\lambda^P_A\mid P_A)pr(\lambda^P_B\mid P_B)\eeq
\end{definition}

\subsection{The PBR Theorem in MOMF}

Translating the details of the PBR's operational scenario in MOMF, now, we can see the reason of why the PBR's conclusion is not necessarily correct. First of all, let us explicitly express all the constraints on MOMF entailed by PBR's scenario.
\begin{enumerate}

\item We assume that the $PM$ separability assumption is valid, \beq \Lambda^{PM}=\Lambda^P\times\Lambda^M\equiv\lambda^{PM}=(\lambda^P,\lambda^M)\eeq
\beq pr(\lambda^{PM}\mid PM)=pr(\lambda^P\mid P)pr(\lambda^M\mid M)\eeq
\item According to PBR's scenario, the preparation procedures are operationally separable, \beq P_{AB}=(P_A, P_B)\eeq 
\beq \rho_{AB}=\rho_A\otimes \rho_B \eeq 
\item According to PBR's scenario, the measurement procedures are entangled, \beq M^{AB}_i\neq(M^A_i , M^B_i)\eeq.  
\item We assume that the PIP assumption is valid, 
\beq\Lambda^{P}_{AB}=\Lambda^P_A\times\Lambda^P_B\equiv\lambda^{P}_{AB}=(\lambda^P_A,\lambda^P_B)\eeq
\beq pr(\lambda^P_{AB}\mid P_{AB})=pr(\lambda^P_A\mid P_A)pr(\lambda^P_B\mid P_B)\eeq
\item Our model is $\psi$-epistemic; Some (preparation) ontic states come from a common region which is compatible with all of the four possible preparation procedures,      
\beq(\lambda^P_{A},\lambda^P_{B})\in\Delta^P_A\times\Delta^P_B\eeq 
\item The support of the measurement ontic space $\Lambda^M_{AB}$ consists of four disjoint regions corresponding to four measurement outcomes, \beq\label{pbr-measurement}\lambda^M_{AB}\in\Delta^M_{AB}=\Delta^M_1\sqcupdot\Delta^M_2\sqcupdot\Delta^M_3\sqcupdot\Delta^M_4\eeq
\end{enumerate}

\begin{remark}
Figure (\ref{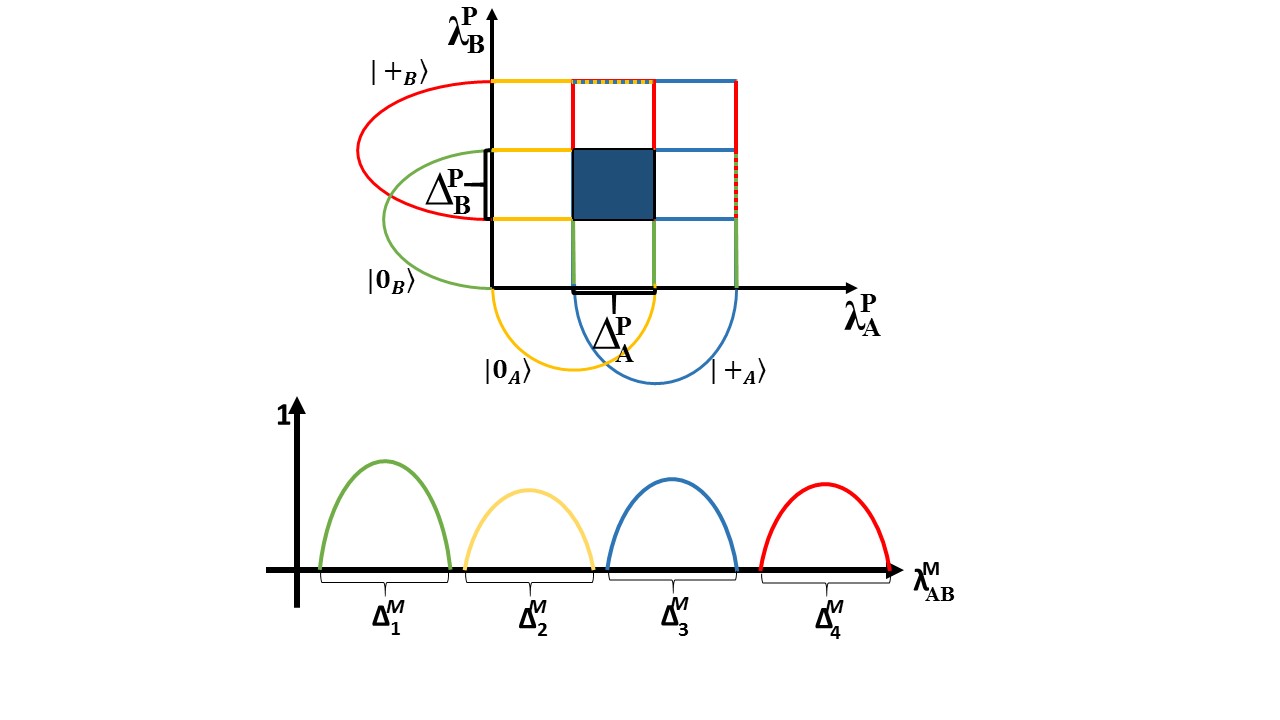}) depicts the preparation and measurement ontic state spaces $\Lambda^P_{AB}$ and $\Lambda^M_{AB}$, separately. As it was already explained, the overlap region $\Delta^P_A\times\Delta^P_B$ is a problematic region for $\psi$-epistemic OMF models. A three-dimensional representation of this problematic region is provided in figure (\ref{three-dim}). In what follows, we are going to prove that $\psi$-epistemic models can reproduce PBR statistics without any contradiction if the ontic states of the measurement device become involved.
\end{remark}

\begin{figure}
\centering
\vspace{-2cm}
\includegraphics[width=0.9\textwidth]{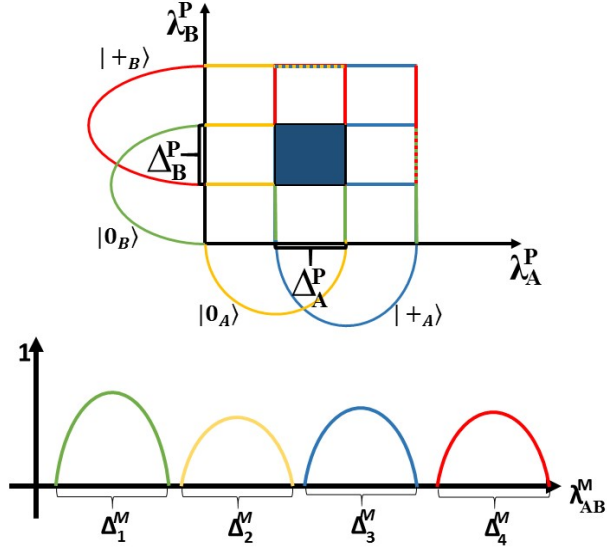}
\vspace{-1cm}
\caption{\label{2+2_dimentional.jpg}The preparation and measurement ontic state spaces $\Lambda^P_{AB}$ and $\Lambda^M_{AB}$ for the PBR's scenario. The upper graph represents the preparation ontic space, $\Lambda^P_{AB}$, which is assumed to be the Cartesian product of $\Lambda^P_{A}$ and $\Lambda^P_{B}$. The lower picture shows the measurement ontic state $\Lambda^M_{AB}$. This space is not separable into $\Lambda^M_{A}$ and $\Lambda^M_{B}$.}
\end{figure}

\textbf{Step 1.} In this step, by combing the above constraints on MOMF definition, we derive further required equations.

Making the PIP and the PM separability assumptions, the total ontic state space $\Lambda^{PM}_{AB}$ can be obtained as follows,         
\beq\label{total-space}
\Lambda^{PM}_{AB}=\Lambda^P_{AB}\times\Lambda^M_{AB}=\Lambda^P_{A}\times\Lambda^P_{B}\times\Lambda^M_{AB}\eeq

Exactly similar to (\ref{similarity}), the support of the preparation ontic state, $supp(\lambda^P_{AB})$, is composed nine mutually exclusive sub-regions. 
\begin{equation}
supp(\lambda^{P}_{AB}\mid P_{AB}) = R^P_{11}\sqcupdot R^P_{12}\sqcupdot ... \sqcupdot R^P_{33} 
\end{equation}

On the other hand, the support of the measurement ontic state, $supp(\lambda^{M}_{AB}$, is composed of four mutually exclusive sub-regions. Placing (\ref{pbr-measurement}) in (\ref{total-space}), it can be understood that the support of the total ontic state, $supp(\lambda^{PM}_{AB}$, is composed of thirty-six mutually exclusive sub-regions.
\begin{equation}\label{supports}
supp(\lambda^{PM}_{AB}\mid P_{AB} , M^{AB}) = supp(\lambda^{P}_{AB}\mid P_{AB})\times supp(\lambda^{M}_{AB}\mid M^{AB})=
\end{equation}
\begin{equation*}
(R^P_{11}\sqcupdot ...\sqcupdot R^P_{33})\times (\Delta^M_1\sqcupdot ...\sqcupdot\Delta^M_4)    
\end{equation*}

We have already seen that the problem of the $\psi$-epistemic models is the existence of the overlap region $R^P_{22}$, and not the other regions. Hence, to assess the ability of a $\psi$-epistemic MOMF model in reproducing PBR statistics, it suffices to check this ability for the states belonging to this problematic region, $\lambda^{PM}_{AB} \in R^P_{22}\times\Delta^M_{AB}$. In the total ontic state space $\Lambda^{PM}_{AB}$, there are four sub-regions being compatible with $R^P_{22} = \Delta^P_A\times\Delta^P_A$. Let us denote them by $R_1$, $R_2$, $R_3$ and $R_4$, where $R_i = R^P_{22}\times\Delta^M_{i}$. 

Figure (\ref{three-dim}) provides a three dimensional representation of these regions. Note that these regions are disjoint, and their union constructs the problematic region $R^P_{22}\times\Delta^M_{AB}$,
\beq
\Delta^P_{A}\times\Delta^P_{B}\times\Delta^M_{AB}= R_1\sqcupdot R_2\sqcupdot R_3\sqcupdot R_4
\eeq

Finally, we have the following equation for computing the probability of finding different outcomes.   
\beq\label{combine} 
pr(m^{AB} \mid P_{AB}, M^{AB}) = 
\eeq
\begin{equation*}
\resizebox{1\hsize}{!}{$\int_{\Lambda^P_A}\int_{\Lambda^P_B}\int_{\Lambda^M_{AB}}d\lambda^P_A d\lambda^P_B d\lambda^M_{AB}pr(\lambda^P_A \mid P_A) pr(\lambda^P_B \mid P_B)pr(\lambda^M_{AB} \mid M^{AB}) pr(m^{AB}\mid\lambda^P_A , \lambda^P_B , \lambda^M_{AB})$}
\end{equation*}

\textbf{Step 2.} In this step, PBR's observations will be expressed within MOMF.

\textbf{Observation 1}: Our measurement device can select the outcomes which are non orthogonal to the prepared state. So the probability of choosing an outcome which is orthogonal to the prepared state is zero. This observation leads to the four following equations on the operational level,

\beq\label{op-set}\textup{operational level}
\begin{cases}
pr(m^{AB}=1\mid \psi_{AB}=\ket{0_A 0_B},M^{AB})=Tr(M^{AB}_1 \rho^{1}_{AB})=0\\
pr(m^{AB}=2\mid \psi_{AB}=\ket{0_A +_B},M^{AB})=Tr(M^{AB}_2 \rho^{2}_{AB})=0\\
pr(m^{AB}=3\mid \psi_{AB}=\ket{+_A 0_B},M^{AB})=Tr(M^{AB}_3 \rho^{3}_{AB})=0\\
pr(m^{AB}=4\mid \psi_{AB}=\ket{+_A +_B},M^{AB})=Tr(M^{AB}_4 \rho^{4}_{AB})=0\\
\end{cases}
\eeq

Combining the above results with equations (\ref{supports}) and (\ref{combine}), one can convert this set of four operational equations into the following set of equations at the ontic (MOMF) level of QM.

\beq\label{obser1}\textup{ontic level} 
\begin{cases}
pr(m^{AB}=1\mid\lambda^P_A ,\lambda^P_B,\lambda^M_{AB})=0\hspace{2mm} when \hspace{2mm}\lambda^{PM}_{AB}\in (R^P_{21}\sqcupdot \textcolor{red}{R^P_{22}}\sqcupdot R^P_{31}\sqcupdot R^P_{32})\times (\Delta^M_1)\\
pr(m^{AB}=2\mid\lambda^P_A ,\lambda^P_B,\lambda^M_{AB})=0\hspace{2mm} when \hspace{2mm}\lambda^{PM}_{AB}\in (R^P_{11}\sqcupdot R^P_{12}\sqcupdot R^P_{21}\sqcupdot \textcolor{red}{R^P_{22}})\times (\Delta^M_2)\\
pr(m^{AB}=3\mid\lambda^P_A ,\lambda^P_B,\lambda^M_{AB})=0\hspace{2mm} when \hspace{2mm}\lambda^{PM}_{AB}\in (\textcolor{red}{R^P_{22}}\sqcupdot R^P_{23}\sqcupdot R^P_{32}\sqcupdot R^P_{33})\times (\Delta^M_3)\\
pr(m^{AB}=4\mid\lambda^P_A ,\lambda^P_B,\lambda^M_{AB})=0\hspace{2mm} when \hspace{2mm}\lambda^{PM}_{AB}\in (R^P_{12}\sqcupdot R^P_{23}\sqcupdot \textcolor{red}{R^P_{22}}\sqcupdot R^P_{23})\times (\Delta^M_4)\\
\end{cases}
\eeq

\textbf{Observation 2}: According to equation (\ref{sharte mohem}), regardless of which quantum state has been prepared, the sum of the response functions must be equal to one.
\beq\label{obser2}
\Sigma^4_{i=1} pr(m^{AB}=i\mid\lambda^P_{A},\lambda^P_{B},\lambda^M_{AB})
\eeq

\begin{figure}
\centering
\vspace{-2cm}
\includegraphics[width=.6\textwidth]{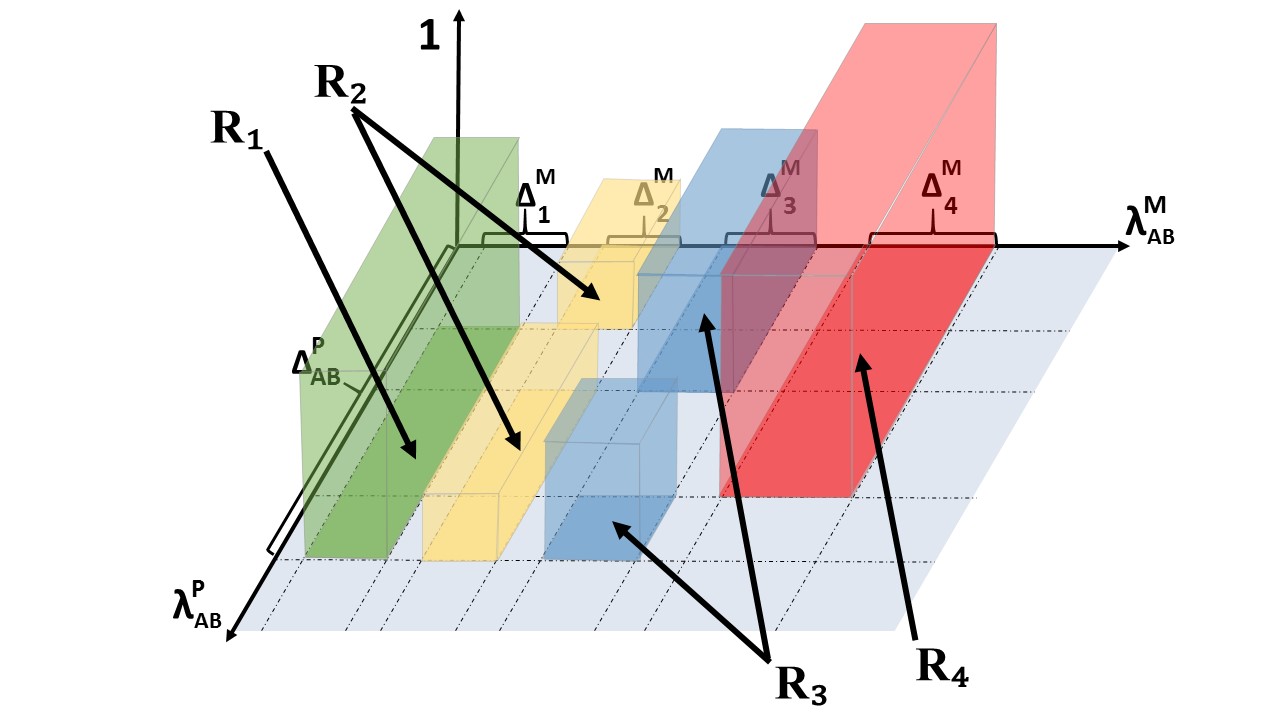}
\caption{\label{three-dim} An illustration of a specific sub-region of the total ontic state space for PBR's scenario, namely $\Delta^P_{A}\times\Delta^P_{B}\times\Lambda^M_{AB}$. This 3D-representation has been obtained from the Cartesian product of the previous 2D-representations of the preparation and measurement procedures in figures (\ref{com-sup}) and (\ref{measurement}). Clearly, this sub-region is composed of four disjoint parts, each of which being able to reproduce PBR's statistics without any contradiction.}
\end{figure}

\textbf{Step 3.} Fortunately, there is no contradiction between the implications of the first observation, in (\ref{obser1}), and the consequences of the second observation, in (\ref{obser2}):

The ontic states $\lambda^{PM}_{AB}=(\lambda^P_A,\lambda^P_B,\lambda^M_{AB})$ in equations of the set (\ref{obser1}) are not the same. Rather, each triplet belongs to one specific sub-region of the total ontic space which is disjoint from the others. Hence, if the response function of our model assigns zero to one triplet, the other three triplets become non-zero. And exactly these three triplets can reproduce the first observation (\ref{obser1}). To be precise, different parts of the measurement ontic state space are completely disjoint from each other. So, we  have
\beq |\Delta^M_i \cap\Delta^M_{j \neq i}|= 0 \eeq
Hence, supports of the ontic state in four equations (\ref{obser1}) do not share any common ontic state,
\beq
|(\textcolor{red}{R^P_{22}}\times\Delta^M_i)\cap (\textcolor{red}{R^P_{22}}\times\Delta^M_{j\neq i})|= 0  
\eeq

\section{Conclusion}
First, I investigated the PBR theorem and its underlying framework, OMF. Pointing to a problem inside the OMF definition, I defined a modified version of OMF, MOMF. Translating the PBR's scenario in MOMF, I showed that PBR's claim in rejecting the epistemic view is not because of the nature of the wavefunction. Rather, (I argued that) the reason of why PBR have found such a result lies in the OMF definition. That this, the PBR theorem has vanished within MOMF. Inserting the measurement ontic state was the key idea of this paper. This analysis suggests that the importance of PBR's work is not because of their conclusion about the nature of the wavefunction. Rather, PBR's finding reveals a significant deficiency of OMF in not considering the ontic level of measurement devices. It is important to note that, the arguments of this paper do not lead to any conclusion about the nature of wavefunction. However, the paper reveals that \textit{PBR's arguments cannot} reject the epistemic view of the wavefunction.     

\section*{Acknowledgements} Special thanks to Erik Curiel and Neil Dewar for their very careful reviews and comments. Also, I would like to thank Stephan Hartmann, Detlef D{\"u}rr, Ward Struyve and Michael Cuffaro for helpful feedback on an earlier draft. The author is supported by the MCMP. 
\bibliographystyle{unsrt}
\bibliography{Ref}
\end{document}